\newtheorem{thm}{Theorem}
\newtheorem{lem}{Lemma}
\newtheorem{cor}{Corollary}
\theoremstyle{definition}
\newtheorem{definition}{Definition}
\title{Some Results on the Information Loss in Dynamical Systems}
\author{\IEEEauthorblockN{Bernhard C. Geiger\IEEEauthorrefmark{1}, Gernot Kubin\IEEEauthorrefmark{1}
\IEEEauthorblockA{\IEEEauthorrefmark{1}Signal Processing and Speech Communication Laboratory, Graz University of Technology, Austria}
$\{$geiger,gernot.kubin$\}$@tugraz.at}}
\begin{document}
\newcounter{myTempCnt}
\newcommand{\x}[1]{x[#1]}
\newcommand{\y}[1]{y[#1]}

\newcommand{\pdfy}{f_Y(y)}

\newcommand{\ent}[1]{H(#1)}
\newcommand{\diffent}[1]{h(#1)}
\newcommand{\derate}[1]{\bar{h}\left(\mathbf{#1}\right)}
\newcommand{\mutinf}[1]{I(#1)}
\newcommand{\ginf}[1]{I_G(#1)}
\newcommand{\kld}[2]{D(#1||#2)}
\newcommand{\binent}[1]{H_2(#1)}
\newcommand{\binentneg}[1]{H_2^{-1}\left(#1\right)}
\newcommand{\entrate}[1]{\bar{H}(\mathbf{#1})}
\newcommand{\mutrate}[1]{\mutinf{\mathbf{#1}}}
\newcommand{\redrate}[1]{\bar{R}(\mathbf{#1})}
\newcommand{\pinrate}[1]{\vec{I}(\mathbf{#1})}
\newcommand{\lossrate}[1]{L(\mathbf{#1})}

\newcommand{\dom}[1]{\mathcal{#1}}
\newcommand{\indset}[1]{\mathbb{I}\left({#1}\right)}

\newcommand{\unif}[2]{\mathcal{U}\left(#1,#2\right)}
\newcommand{\chis}[1]{\chi^2\left(#1\right)}
\newcommand{\chir}[1]{\chi\left(#1\right)}
\newcommand{\normdist}[2]{\mathcal{N}\left(#1,#2\right)}
\newcommand{\Prob}[1]{\mathrm{Pr}(#1)}
\newcommand{\Mar}[1]{\mathrm{Mar}(#1)}
\newcommand{\Qfunc}[1]{Q\left(#1\right)}

\newcommand{\expec}[1]{\mathrm{E}\left\{#1\right\}}
\newcommand{\expecwrt}[2]{\mathrm{E}_{#1}\left\{#2\right\}}
\newcommand{\var}[1]{\mathrm{Var}\left\{#1\right\}}
\renewcommand{\det}[1]{\mathrm{det}\left\{#1\right\}}
\newcommand{\cov}[1]{\mathrm{Cov}\left\{#1\right\}}
\newcommand{\sgn}[1]{\mathrm{sgn}\left(#1\right)}
\newcommand{\sinc}[1]{\mathrm{sinc}\left(#1\right)}
\newcommand{\e}[1]{\mathrm{e}^{#1}}
\newcommand{\multint}{\iint{\cdots}\int}
\newcommand{\modd}[3]{((#1))_{#2}^{#3}}
\newcommand{\quant}[1]{Q\left(#1\right)}

\newcommand{\hvec}{\mathbf{h}}
\newcommand{\avec}{\mathbf{a}}
\newcommand{\fvec}{\mathbf{f}}
\newcommand{\vvec}{\mathbf{v}}
\newcommand{\xvec}{\mathbf{x}}
\newcommand{\Xvec}{\mathbf{X}}
\newcommand{\Xhvec}{\hat{\mathbf{X}}}
\newcommand{\xhvec}{\hat{\mathbf{x}}}
\newcommand{\xtvec}{\tilde{\mathbf{x}}}
\newcommand{\Yvec}{\mathbf{Y}}
\newcommand{\yvec}{\mathbf{y}}
\newcommand{\Zvec}{\mathbf{Z}}
\newcommand{\wvec}{\mathbf{w}}
\newcommand{\Wvec}{\mathbf{W}}
\newcommand{\Hmat}{\mathbf{H}}
\newcommand{\Amat}{\mathbf{A}}
\newcommand{\Fmat}{\mathbf{F}}

\newcommand{\zerovec}{\mathbf{0}}
\newcommand{\eye}{\mathbf{I}}
\newcommand{\evec}{\mathbf{i}}

\newcommand{\zeroone}{\left[\begin{array}{c}\zerovec^T\\ \eye\end{array} \right]}
\newcommand{\zerooneT}{\left[\begin{array}{cc}\zerovec & \eye\end{array} \right]}
\newcommand{\zerooneM}{\left[\begin{array}{cc}\zerovec &\zerovec^T\\\zerovec& \eye\end{array} \right]}

\newcommand{\Cxx}{\mathbf{C}_{XX}}
\newcommand{\Cxh}{\mathbf{C}_{\hat{X}\hat{X}}}
\newcommand{\rxx}{\mathbf{r}_{XX}}
\newcommand{\Cxy}{\mathbf{C}_{XY}}
\newcommand{\Cyy}{\mathbf{C}_{YY}}
\newcommand{\Cnn}{\mathbf{C}_{NN}}
\newcommand{\Cyx}{\mathbf{C}_{YX}}
\newcommand{\Cygx}{\mathbf{C}_{Y|X}}

\newcommand{\NN}{{N{\times}N}}
\newcommand{\perr}{P_e}
\newcommand{\perh}{\hat{\perr}}
\newcommand{\pert}{\tilde{\perr}}

\newcommand{\vecind}[1]{#1_0^n}
\newcommand{\roots}[2]{{#1}_{#2}^{(i_{#2})}}
\newcommand{\rootx}[1]{x_{#1}^{(i_{#1})}}

\newcommand{\markkern}[1]{f_M(#1)}
\newcommand{\pole}{a_1}
\newcommand{\preim}[1]{g^{-1}[#1]}
\newcommand{\Xmax}{\bar{X}}
\newcommand{\Xmin}{\underbar{X}}
\newcommand{\xmax}{x_{\max}}
\newcommand{\xmin}{x_{\min}}
\newcommand{\limn}{\lim_{n\to\infty}}
\newcommand{\limX}{\lim_{\hat{\Xvec}\to\Xvec}}
\newcommand{\sumin}{\sum_{i=1}^n}
\newcommand{\finv}{f_\mathrm{inv}}

\newcommand{\modeq}[1]{g(#1)}

\newcommand{\delay}[2]{\psblock(#1){#2}{\footnotesize$z^{-1}$}}
\newcommand{\Quant}[2]{\psblock(#1){#2}{\footnotesize$\quant{\cdot}$}}
\newcommand{\moddev}[2]{\psblock(#1){#2}{\footnotesize$\modeq{\cdot}$}}

\maketitle

\begin{abstract}
In this work we investigate the information loss in (nonlinear) dynamical input-output systems and provide some general results. In particular, we present an upper bound on the information loss rate, defined as the (non-negative) difference between the entropy rates of the jointly stationary stochastic processes at the input and output of the system.

We further introduce a family of systems with vanishing information loss rate. It is shown that not only linear filters belong to that family, but -- under certain circumstances -- also finite-precision implementations of the latter, which typically consist of nonlinear elements.
\end{abstract}

\section{Introduction}\label{sec:intro}
Transmission and processing of information is the primary concern in many fields of communications, signal processing, and machine learning. The typical impairments considered in these contexts are noise and interference, incomplete data sets, and coarse observations, eliciting both information-theoretic and energy-centered analyses. In contrary, the effect of deterministic input-output systems on the information content, i.e., the entropy rate, of a signal has not yet been thoroughly analyzed. Still, nonlinear dynamical systems -- capable of changing information content -- are omnipresent in communication systems in the roles of high-power amplifiers or frequency mixers. Another example is the energy detector, a low-complexity receiver architecture for wireless communications. To obtain a better understanding of the effects of these system components, an information-theoretic treatment is essential.

In this paper, we establish a framework for analyzing the effects of discrete-time dynamical systems with a finite-dimensional state vector on the entropy rate of a signal. While the analysis of continuous-valued stochastic processes will be left for future work, here we focus on (jointly) stationary input and output processes taking values from countable alphabets. 

The data processing inequality (DPI,~\cite[pp.~35]{Cover_Information2}) states that the entropy of a discrete random variable (RV) cannot increase by passing the RV through a static nonlinearity. It was shown that the same result holds for entropy rates of jointly stationary stochastic processes on finite alphabets, both for static nonlinearities~\cite{Watanabe_InfoLoss} and general dynamical systems~\cite{Pinsker_InfoStab}.
Continuous-valued processes passing through linear filters were already analyzed by Shannon in terms of differential entropy rates~\cite{Shannon_TheoryOfComm,Papoulis_Probability}, which in our opinion are not adequate measures of information loss, cf. Section~\ref{sec:linfilt}. The conditional entropy, used to characterize the information lost by passing a continuous RV through a static nonlinearity~\cite{Geiger_ISIT2011arXiv} or by multiplying two integers~\cite{Pippenger_MultLoss}, appears to be more appropriate.

We start by defining the information loss rate in Section~\ref{sec:problem} and show that this quantity is equal to the difference between the entropy rates of the input and output processes. This choice establishes the DPI for dynamical systems in Section~\ref{sec:infoLossRate}, stating that the information loss rate is non-negative. This result is then complemented by an upper bound that can be evaluated easily. In Section~\ref{sec:partInv} we introduce a family of dynamical systems for which we show that the information loss rate vanishes. This family not only comprises a large class of stable linear filters (see Section~\ref{sec:linfilt}), but also their finite-precision counterparts, commonly used in digital signal processing. Aside from the latter, Section~\ref{sec:examples} discusses some other examples illustrating our theoretical results. 

This document is an extended version of a paper submitted to an IEEE conference.

\section{Problem Statement \& Preliminaries}
\label{sec:problem}
We consider a discrete-time regular two-sided stationary stochastic process $\Xvec$ taking values from a countable set $\dom{X}$. Let $X_n$ denote the RV of the $n$-th sample and let $X_k^n=(X_k,X_{k+1},\dots,X_n)$, thus $\Xvec=X_{-\infty}^\infty$. For the actual value of $X_n$ we write $x_n$. 
We further consider another countable set $\dom{Y}$ which needs not be identical to $\dom{X}$. Let $\ent{X_n}$ denote the zeroth-order entropy of $X_n$ and let $\entrate{X}=\limn\frac{1}{n}\ent{X_1^n}$ denote the entropy rate of $\Xvec$. The restriction to countable sets ensures that entropies and entropy rates are well-defined. 

The following class of dynamical systems is treated in this work:
\begin{definition}[Finite-Dimensional Dynamical System]\label{def:function}
 Let $Y_n=f(X_{n-N}^n,Y_{n-M}^{n-1})$, $0\leq M,N<\infty$, be the RV of the $n$-th output sample of a dynamical system with a finite-dimensional state vector subject to the input process $\Xvec$. Here, $f{:}\; \dom{X}^{N+1}\times\dom{Y}^M\to\dom{Y}$ is a function such that the sequence of output samples, $Y_n$, constitutes a two-sided stochastic process $\Yvec$ jointly stationary with $\Xvec$.
\end{definition}

\begin{definition}[Information Loss Rate]\label{def:infoloss}
 Let $\Xvec$ and $\Yvec$ be jointly stationary processes on countable sets related as in Definition~\ref{def:function}. The average information lost per sample is given by the conditional entropy rate
\begin{equation}
 \entrate{X|Y} = \limn \frac{1}{n} \ent{X_1^n|Y_1^n}.
\end{equation}
\end{definition}

Characterizing the information loss as a conditional entropy rate is quite intuitive: The conditional entropy rate denotes the average number of bits per sample unknown about the input sequence after observing the output sequence; i.e., the average information lost per sample by passing the sequence through the system in question.

Before proceeding with the analysis, we will introduce two Lemmas:

\begin{lem}\label{lem:addingDepRVs} 
 For any set of discrete RVs $Z_1^n$ and any function $f(Z_k,Z_l,\dots)$, $1\leq k,l, \dots \leq n$, the following holds:
\begin{equation}
 \ent{Z_1^n,f(Z_k,Z_l,\dots)} = \ent{Z_1^n}
\end{equation}
\end{lem}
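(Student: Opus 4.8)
The plan is to expand the joint entropy by the chain rule and to show that the extra term vanishes because $f(Z_k,Z_l,\dots)$ is a deterministic function of a subset of the conditioning variables. Writing $W=f(Z_k,Z_l,\dots)$ for brevity, the chain rule for entropy gives
\begin{equation}
 \ent{Z_1^n,W} = \ent{Z_1^n} + \ent{W|Z_1^n}.
\end{equation}
The claim therefore reduces to proving that $\ent{W|Z_1^n}=0$.

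First I would observe that, for every realization $z_1^n$ of $Z_1^n$ occurring with positive probability, the value of $W$ is completely determined: since $W=f(z_k,z_l,\dots)$ depends only on coordinates contained in $z_1^n$, the conditional distribution of $W$ given $Z_1^n=z_1^n$ is a point mass concentrated at $f(z_k,z_l,\dots)$. A point mass has zero entropy, so $\ent{W|Z_1^n=z_1^n}=0$ for each such $z_1^n$.

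Averaging this conditional entropy over the distribution of $Z_1^n$ then yields $\ent{W|Z_1^n}=\sum_{z_1^n}\Prob{Z_1^n=z_1^n}\ent{W|Z_1^n=z_1^n}=0$, which combined with the chain-rule identity proves the lemma. The restriction to countable alphabets guarantees that every entropy and conditional entropy appearing here is well-defined, so no measure-theoretic subtleties intervene.

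I expect essentially no substantive obstacle here: the only point requiring care is the elementary observation that conditioning on the entire block $Z_1^n$ fixes every argument of $f$, so that $W$ carries no residual uncertainty. The argument is insensitive to how many of the $Z_i$ actually appear as arguments of $f$ and to whether indices among $k,l,\dots$ repeat, since in all cases those arguments are coordinates of $Z_1^n$.
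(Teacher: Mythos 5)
Your proof is correct and is essentially the argument the paper relies on: the paper's ``proof'' is just a citation to Problem~2.4 of Cover and Thomas, whose intended solution is exactly your chain-rule decomposition $\ent{Z_1^n,W}=\ent{Z_1^n}+\ent{W|Z_1^n}$ together with the observation that $W$ is a deterministic function of $Z_1^n$, so $\ent{W|Z_1^n}=0$. Nothing further is needed.
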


\begin{proof}
See~\cite[Prob.~2.4]{Cover_Information2}.
\end{proof}

\begin{lem}\label{lem:finvsinf}
 Let $\Xvec$ and $\Yvec$ be jointly stationary stochastic processes on countable sets. Then, for $M<\infty$,
\begin{IEEEeqnarray}{RCL}
 \entrate{X} 
&=&\limn \frac{1}{n} \ent{X_1^n|Y_1^M}=\limn \frac{1}{n} \ent{X_1^n,Y_1^M}.\notag
\end{IEEEeqnarray}
\end{lem}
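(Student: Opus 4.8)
The plan is to show that both right-hand expressions differ from $\entrate{X}=\limn\frac1n\ent{X_1^n}$ only through terms that involve the \emph{finite} block $Y_1^M$, and that such terms vanish after division by $n$. The whole argument rests on the observation that, since $M<\infty$, the quantity $\ent{Y_1^M}$ is a finite constant independent of $n$ (on a finite alphabet $\ent{Y_1^M}\le M\log|\dom{Y}|$; on a countable alphabet this is the standing finiteness assumption that makes entropies and entropy rates well-defined). The only tools needed are the chain rule, the two monotonicity properties of entropy, and the squeeze theorem.

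I would dispatch the second equality first, as it is immediate from the chain rule:
\[
\ent{X_1^n,Y_1^M}=\ent{Y_1^M}+\ent{X_1^n|Y_1^M}.
\]
Thus $\frac1n\ent{X_1^n,Y_1^M}$ and $\frac1n\ent{X_1^n|Y_1^M}$ differ exactly by $\frac1n\ent{Y_1^M}$, which tends to $0$ as $n\to\infty$; hence the two limits coincide whenever either exists.

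For the first equality I would sandwich the conditional entropy. On the one hand, conditioning cannot increase entropy, so $\ent{X_1^n|Y_1^M}\le\ent{X_1^n}$. On the other hand, adjoining RVs cannot decrease joint entropy, so $\ent{X_1^n}\le\ent{X_1^n,Y_1^M}=\ent{Y_1^M}+\ent{X_1^n|Y_1^M}$ (equivalently, $\mutinf{X_1^n;Y_1^M}\le\ent{Y_1^M}$). Combining these bounds,
\[
\ent{X_1^n}-\ent{Y_1^M}\le\ent{X_1^n|Y_1^M}\le\ent{X_1^n}.
\]
Dividing by $n$ and letting $n\to\infty$, the outer bounds both converge to $\entrate{X}$ (the existence of this limit following from stationarity and the subadditivity of $n\mapsto\ent{X_1^n}$), while $\frac1n\ent{Y_1^M}\to0$; the squeeze theorem then forces $\limn\frac1n\ent{X_1^n|Y_1^M}=\entrate{X}$.

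The only point requiring genuine care is the finiteness of $\ent{Y_1^M}$: because in Definition~\ref{def:function} the block $Y_1^M$ is defined recursively through past outputs, it may in principle depend on infinitely many past inputs, so its finiteness is not automatic on a countable alphabet and must be inherited from the well-definedness assumptions of Section~\ref{sec:problem}. Note that $\entrate{X}$ itself need not be finite for the argument to go through, and that neither Lemma~\ref{lem:addingDepRVs} nor the specific structure of $f$ is used here.
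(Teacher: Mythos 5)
Your proof is correct and follows essentially the same route as the paper's: the same sandwich $\ent{X_1^n|Y_1^M}\le\ent{X_1^n}\le\ent{X_1^n,Y_1^M}$, the chain rule to isolate $\ent{Y_1^M}$, and the observation that $\frac{1}{n}\ent{Y_1^M}\to 0$. The only divergence is your explicit caveat that $\ent{Y_1^M}$ must be finite on a countable alphabet, which is actually more careful than the paper itself: the paper instead appends a (rather hand-waving) remark claiming the vanishing of $\frac{1}{n}\ent{Y_1^M}$ survives even when $\ent{Y_1^M}$ is infinite, by comparing growth rates of $M\log|\dom{Y}|$ and $n$ -- an argument that does not really hold for a fixed infinite constant, so your flagged assumption is the honest resolution.
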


\begin{proof}
 Clearly,
\begin{equation}
 \ent{X_1^n|Y_1^M} \leq \ent{X_1^n} \leq \ent{X_1^n,Y_1^M}
\end{equation}
for all $n$, thus also in the limit. Now, since $\ent{X_1^n,Y_1^M}=\ent{X_1^n|Y_1^M}+\ent{Y_1^M}$ and since all involved entities are non-negative,
\begin{IEEEeqnarray}{RCL}
\entrate{X} & \leq& 
\limn \frac{1}{n} \ent{X_1^n|Y_1^M}+ \underbrace{\limn \frac{1}{n}\ent{Y_1^M}}_{\to 0}.\label{eq:lemma}
\end{IEEEeqnarray}
Thus in the limit the upper and lower bound are equal and the proof is completed.
\end{proof}

Since the input and output alphabets of the dynamical systems can be countable, it may occur that the entropy of a single sample becomes infinite. Yet, by the maximum entropy property of the uniform distribution,
\begin{equation}
 \ent{Y_1^M}\leq M\ent{Y}\leq \lim_{|\dom{Y}|\to\infty} M\log |\dom{Y}|
\end{equation}
which approaches infinity at a slower rate than $\limn n$. Thus the term on the right in~\eqref{eq:lemma} approaches zero even for processes $\Yvec$ with infinite zeroth-order entropy or infinite entropy rate.

\section{Information Loss Rate in Dynamical Systems}\label{sec:infoLossRate}
In this Section, which comprises the main contribution of this work, we present some general results on the information loss rate induced by a system satisfying Definition~\ref{def:function}. We will start by proving a Theorem which essentially states that the information loss rate is identical to the difference of entropy rates:

\begin{thm}\label{thm:ILisED}
 Let $\Xvec$ and $\Yvec$ be jointly stationary processes on countable sets related as in Definition~\ref{def:function}. Then, the information loss rate is given by the difference of entropy rates:
\begin{equation}
 \entrate{X|Y} = \entrate{X}-\entrate{Y}
\end{equation}
\end{thm}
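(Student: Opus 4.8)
The plan is to reduce the claim to a single vanishing residual term via the chain rule for entropy. For every finite $n$ one may write
\[
\ent{X_1^n|Y_1^n} = \ent{X_1^n} - \ent{Y_1^n} + \ent{Y_1^n|X_1^n},
\]
which follows by expanding the joint entropy $\ent{X_1^n,Y_1^n}$ in the two possible orders. Dividing by $n$ and letting $n\to\infty$, the first two terms converge to $\entrate{X}$ and $\entrate{Y}$ respectively, directly by the definition of the entropy rates (both limits existing because $\Xvec$ and $\Yvec$ are stationary). Hence it suffices to prove that the normalized residual $\frac{1}{n}\ent{Y_1^n|X_1^n}$ tends to zero; the difference of entropy rates then follows, once one verifies that the three individual limits exist so that the limit of the sum splits.

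The decisive observation, which I would establish next, is that the output block is—up to finitely many initial conditions—a deterministic function of the input. Unrolling Definition~\ref{def:function} recursively, I would argue by induction that each $Y_k$, $1\leq k\leq n$, is a function of the input samples $X_{1-N}^k$ and the initial output block $Y_{1-M}^0$: the base case $Y_1=f(X_{1-N}^1,Y_{1-M}^0)$ is immediate, and the inductive step substitutes the already-expressed $Y_1,\dots,Y_{k-1}$ into $Y_k=f(X_{k-N}^k,Y_{k-M}^{k-1})$. Consequently $Y_1^n$ is a deterministic function of the pair $\left(X_{1-N}^n,Y_{1-M}^0\right)$, so that $\ent{Y_1^n|X_{1-N}^n,Y_{1-M}^0}=0$ by the conditional version of Lemma~\ref{lem:addingDepRVs}.

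It then remains to bound the residual. Using the chain rule together with the facts that adjoining variables can only increase entropy and that conditioning can only decrease it,
\[
\ent{Y_1^n|X_1^n} \leq \ent{Y_1^n,X_{1-N}^0,Y_{1-M}^0|X_1^n} = \ent{X_{1-N}^0,Y_{1-M}^0|X_1^n} \leq \ent{X_{1-N}^0}+\ent{Y_{1-M}^0},
\]
where the middle equality uses that $Y_1^n$ is determined by $X_{1-N}^n=\left(X_{1-N}^0,X_1^n\right)$ together with $Y_{1-M}^0$. The right-hand side involves only the $N$ input samples $X_{1-N}^0$ and the $M$ output samples $Y_{1-M}^0$ and is therefore independent of $n$, so dividing by $n$ sends it to zero. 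I expect the main subtlety to lie exactly here in the countably-infinite-alphabet case, where these finite-block entropies may themselves be infinite: one must invoke the maximum-entropy argument following Lemma~\ref{lem:finvsinf} to ensure that such a block entropy still grows more slowly than $n$, so that the normalized residual vanishes. With $\frac{1}{n}\ent{Y_1^n|X_1^n}\to 0$ established and the two leading limits existing, the identity of the first paragraph yields $\entrate{X|Y}=\entrate{X}-\entrate{Y}$.
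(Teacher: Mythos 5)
Your proof is correct, but it takes a genuinely different route from the paper's. Both arguments ultimately come down to showing that the joint block entropy satisfies $\limn\frac{1}{n}\ent{X_1^n,Y_1^n}=\entrate{X}$, equivalently that the residual $\frac{1}{n}\ent{Y_1^n|X_1^n}$ vanishes; but you and the paper kill this residual in opposite directions. The paper peels output samples off the \emph{top} of the block: for $n>\max\{M,N\}$, the newest sample $Y_n=f(X_{n-N}^n,Y_{n-M}^{n-1})$ is a function of variables already inside $\{X_1^n,Y_1^{n-1}\}$, so Lemma~\ref{lem:addingDepRVs} gives $\ent{X_1^n,Y_1^n}=\ent{X_1^n,Y_1^{n-1}}$, and iterating leaves $\ent{X_1^n,Y_1^{\max\{M,N\}}}$, which Lemma~\ref{lem:finvsinf} identifies with $\ent{X_1^n}$ in the limit. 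You instead unroll the recursion \emph{forward} from boundary conditions at non-positive indices (legitimate, since the processes are two-sided by assumption), concluding that $Y_1^n$ is a deterministic function of $(X_{1-N}^n,Y_{1-M}^0)$ and hence $\ent{Y_1^n|X_1^n}\leq\ent{X_{1-N}^0}+\ent{Y_{1-M}^0}$, a bound independent of $n$. Your version buys a one-shot constant bound and makes the interpretation explicit -- the only information in the output block not accounted for by the input block is the finitely many boundary samples -- which nicely foreshadows the structure of Theorem~\ref{thm:InfoLossPI}; the paper's version buys the fact that it never references samples outside the observation window $[1,n]$ and reuses Lemma~\ref{lem:finvsinf}, which it needs anyway. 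Both proofs share the same caveats on countable alphabets: the splitting of limits implicitly requires avoiding $\infty-\infty$, and the finite-block entropies appearing in the residual may be infinite, which you correctly flag and resolve by the same maximum-entropy argument the paper invokes after Lemma~\ref{lem:finvsinf}.
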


\begin{proof}
While the proof for static functions (i.e., $M=N=0$) is relatively simple~\cite{Watanabe_InfoLoss}, for dynamical systems we have to show that
\begin{IEEEeqnarray}{RCL}
 \entrate{X|Y} &=& \limn \frac{1}{n} \left( \ent{X_1^n,Y_1^n}-\ent{Y_1^n}\right)\\
&=& \limn \frac{1}{n} \ent{X_1^n} - \limn \frac{1}{n}\ent{Y_1^n}
\end{IEEEeqnarray}
i.e., that
\begin{equation}
 \limn \frac{1}{n} \ent{X_1^n,Y_1^n} = \limn \frac{1}{n} \ent{X_1^n}.
\end{equation}
Consider that, for $n>\max\{M,N\}$
\begin{IEEEeqnarray}{RCL}
 \ent{X_1^n,Y_1^n} &=& \ent{Y_n,X_1^n,Y_1^{n-1}}\\
 &=& \ent{f(X_{n-N}^n,Y_{n-M}^{n-1}),X_1^n,Y_1^{n-1}}\\
 &\stackrel{(a)}{=}& \ent{X_1^n,Y_1^{n-1}}
\end{IEEEeqnarray}
where $(a)$ is due to Lemma~\ref{lem:addingDepRVs}. By repeated application,
\begin{equation}
 \ent{X_1^n,Y_1^n} = \ent{X_1^n,Y_1^{\max\{M,N\}}}.
\end{equation}
Since this holds for all $n>\max\{M,N\}$, it also holds in the limit and with Lemma~\ref{lem:finvsinf} we obtain
\begin{equation}
\limn \frac{1}{n}\ent{X_1^n,Y_1^{\max\{M,N\}}} = \limn \frac{1}{n} \ent{X_1^n}
\end{equation}
and thus
\begin{equation}
 \entrate{X|Y} = \entrate{X}-\entrate{Y}.
\end{equation}
This completes the proof.
\end{proof}

The significance of this Theorem lies in the fact that the information loss can be inferred by comparing the entropy rates of the input and output processes. Note that the same does not hold for differential entropy rates, as we will argue in Section~\ref{sec:linfilt}.

By the non-negativity of the conditional entropy rate the following Corollary to Theorem~\ref{thm:ILisED} shows that the entropy rate of the system output cannot be larger than the entropy rate of the system input. This result, originally stated in~\cite{Pinsker_InfoStab} for finite alphabets, further justifies our intuitive definition of information loss:

\begin{cor}[DPI for Dynamical Systems]\label{cor:DPIFinMem}
Let $\Xvec$ and $\Yvec$ be jointly stationary processes on countable sets related as in Definition~\ref{def:function}. Then, the entropy rate of the output process $\Yvec$ cannot be larger than the entropy rate of the input process $\Xvec$, i.e.,
\begin{equation}
 \entrate{Y}\leq\entrate{X}.
\end{equation}
\end{cor}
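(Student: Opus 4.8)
The plan is to derive Corollary~\ref{cor:DPIFinMem} directly from Theorem~\ref{thm:ILisED} together with the non-negativity of conditional entropy rates, which is essentially immediate. The key observation is that Theorem~\ref{thm:ILisED} already establishes the identity $\entrate{X|Y} = \entrate{X}-\entrate{Y}$, so the inequality $\entrate{Y}\leq\entrate{X}$ is equivalent to asserting that the left-hand side is non-negative.

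First I would note that $\entrate{X|Y}$ is defined in Definition~\ref{def:infoloss} as a limit of normalized conditional entropies $\frac{1}{n}\ent{X_1^n|Y_1^n}$. Since each conditional entropy $\ent{X_1^n|Y_1^n}$ is a sum of (non-negative) entropies of conditional distributions on a countable alphabet, it is non-negative for every $n$; dividing by the positive quantity $n$ preserves this, and the limit of a sequence of non-negative reals is non-negative. Hence $\entrate{X|Y}\geq 0$.

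Second, I would simply substitute this into the identity from Theorem~\ref{thm:ILisED}: from $\entrate{X|Y} = \entrate{X}-\entrate{Y}\geq 0$ we conclude $\entrate{Y}\leq\entrate{X}$, completing the argument.

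There is no genuine obstacle here, since the statement is a one-line consequence of the already-proved Theorem. The only point requiring mild care is ensuring that the conditional entropy rate is well-defined and finite (or at least that the subtraction is meaningful) even when the single-sample entropies on the countable alphabets may be infinite; but this concern is already addressed by the remarks following Lemma~\ref{lem:finvsinf}, which guarantee that the relevant limits behave properly, so the non-negativity of $\entrate{X|Y}$ transfers cleanly to the desired inequality.
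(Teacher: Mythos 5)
Your proposal is correct and follows exactly the route the paper itself takes: Corollary~\ref{cor:DPIFinMem} is stated as an immediate consequence of Theorem~\ref{thm:ILisED} combined with the non-negativity of the conditional entropy rate $\entrate{X|Y}$. Your additional remarks on why $\entrate{X|Y}\geq 0$ and on the countable-alphabet subtleties only make explicit what the paper leaves implicit.
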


Generally, the computation of entropy rates is a non-trivial problem, where closed-form solutions exist only for simple processes (e.g., Markov chains). Since functions of stochastic processes rarely allow such a simplified treatment, the availability of bounds is of vital importance. We will thus present an upper bound on the information loss rate, which is simple to evaluate:
\begin{thm}[Upper Bound]\label{thm:UBInfoLossRate}
 Let $\Xvec$ and $\Yvec$ be jointly stationary processes on countable sets related as in Definition~\ref{def:function}. Then, the information loss rate is bounded by
\begin{equation}
 \entrate{X|Y} \leq \max_{(x,\theta)\in\dom{X}\times\dom{T}}\log|f^{-1}_\theta[f_\theta(x)]|
\end{equation}
where $\dom{T}=\dom{X}^N\times\dom{Y}^M$, $\theta\in\dom{T}$ are the possible values of the RV $\Theta_n=\{X_{n-N}^{n-1},Y_{n-M}^{n-1}\}$, and $f^{-1}_\theta[\cdot]$ denotes the preimage under $f_\theta$, an instantiation of the function $f_{\Theta_n}(\cdot)=f(\cdot,\Theta_n)$.
\end{thm}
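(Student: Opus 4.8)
The plan is to bound the finite-horizon conditional entropy $\ent{X_1^n|Y_1^n}$ term by term via the chain rule and then pass to the limit in the definition of $\entrate{X|Y}$. Expanding
\begin{equation}
\ent{X_1^n|Y_1^n} = \sum_{k=1}^n \ent{X_k|X_1^{k-1},Y_1^n},
\end{equation}
I would try to show that each summand with index $k>\max\{M,N\}$ is at most $C:=\max_{(x,\theta)}\log|f^{-1}_\theta[f_\theta(x)]|$, so that the Cesàro average is controlled by $C$ and the finitely many remaining terms wash out.

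The per-sample estimate rests on the observation that, for $k>\max\{M,N\}$ (and $k\leq n$), the conditioning set $(X_1^{k-1},Y_1^n)$ contains the entire state $\Theta_k=\{X_{k-N}^{k-1},Y_{k-M}^{k-1}\}$ as well as the current output $Y_k$; this is exactly where the index constraints enter. Since conditioning on additional variables cannot increase entropy,
\begin{equation}
\ent{X_k|X_1^{k-1},Y_1^n}\leq \ent{X_k|\Theta_k,Y_k}.
\end{equation}
Now fix $\Theta_k=\theta$ and $Y_k=y$ with $y=f_\theta(x)$ for some admissible $x$; because $Y_k=f_\theta(X_k)$, the conditional law of $X_k$ is supported on the preimage $f^{-1}_\theta[f_\theta(x)]$, so the maximum-entropy property of the uniform distribution gives $\ent{X_k|\Theta_k=\theta,Y_k=y}\leq\log|f^{-1}_\theta[f_\theta(x)]|\leq C$. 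Averaging this over the realizations of $\theta$ and $y$ yields $\ent{X_k|\Theta_k,Y_k}\leq C$, hence every summand with $k>\max\{M,N\}$ is at most $C$.

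It remains to dispose of the first $\max\{M,N\}$ summands and to take the limit. By the chain rule these collect into $\ent{X_1^{\max\{M,N\}}|Y_1^n}$, which is bounded above by the $n$-independent quantity $\ent{X_1^{\max\{M,N\}}}$, so that
\begin{equation}
\frac1n\ent{X_1^n|Y_1^n}\leq \frac1n\ent{X_1^{\max\{M,N\}}}+\frac{n-\max\{M,N\}}{n}\,C,
\end{equation}
and letting $n\to\infty$ leaves only $C$, the first term vanishing by the same reasoning used after Lemma~\ref{lem:finvsinf} for processes with possibly infinite single-sample entropy. I expect the main obstacle to lie precisely in this boundary handling: verifying that the state and the current output genuinely sit inside $(X_1^{k-1},Y_1^n)$ for all $k>\max\{M,N\}$, and arguing that the finitely many leftover terms contribute nothing in the limit even though the alphabets are only countable. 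The per-sample preimage estimate itself is routine once the correct conditioning set has been identified.
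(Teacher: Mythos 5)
Your proof is correct and reaches the paper's bound, but it gets there by a genuinely different route. The paper writes $\ent{X_1^n|Y_1^n}=\ent{X_1^n,Y_1^n}-\ent{Y_1^n}$, expands both joint entropies by the chain rule with \emph{causal} conditioning, uses ``conditioning reduces entropy'' to align the two sums into $\sumin\ent{X_i|X_1^{i-1},Y_1^i}$, and then invokes stationarity (the summand is non-negative and decreasing in $i$) together with the Ces\'aro-mean theorem to replace the time average by $\limn\ent{X_n|X_1^{n-1},Y_1^n}$, which is finally relaxed to $\ent{X_n|Y_n,\Theta_n}$. You instead apply the chain rule directly to the conditional entropy, conditioning every term on the \emph{whole} output block $Y_1^n$ (including future outputs), observe that for $k>\max\{M,N\}$ the conditioning set contains $(\Theta_k,Y_k)$, bound each such term uniformly by $C=\max_{(x,\theta)\in\dom{X}\times\dom{T}}\log|f^{-1}_\theta[f_\theta(x)]|$, and collect the first $\max\{M,N\}$ terms into $\ent{X_1^{\max\{M,N\}}|Y_1^n}\leq\ent{X_1^{\max\{M,N\}}}$. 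The core per-sample estimate --- view the system as the parameterized static map $f_{\Theta_k}$, note that given $(\Theta_k,Y_k)$ the conditional law of $X_k$ is supported on a preimage, and apply the maximum-entropy property of the uniform distribution --- is identical in both proofs. What your routing buys: it avoids the Ces\'aro-mean theorem and the monotonicity argument entirely, so stationarity is needed only for the rate itself to be well defined, and it yields the non-asymptotic bound $\ent{X_1^n|Y_1^n}\leq\ent{X_1^{\max\{M,N\}}}+(n-\max\{M,N\})C$, in the spirit of Theorem~\ref{thm:InfoLossPI}; the price is the explicit boundary bookkeeping, which the paper's Ces\'aro argument sidesteps. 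Note finally that both proofs share the same soft spot on countable alphabets: if $\ent{X_1^{\max\{M,N\}}}$ (respectively, the first summands in the paper's time average) is infinite, the passage to the limit rests on the informal ``slower than $n$'' argument given after Lemma~\ref{lem:finvsinf}, which you correctly flag and reuse rather than resolve.
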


\begin{IEEEproof}
 \begin{IEEEeqnarray}{RCL}
  \entrate{X|Y} &=& \limn \frac{1}{n} \left(\ent{X_1^n,Y_1^n}-\ent{Y_1^n}\right)\\
&\stackrel{(a)}{=}& \limn \frac{1}{n} \left( \sumin \ent{X_i,Y_i|X_1^{i-1},Y_1^{i-1}} \right.\notag\\
&&-\left.\sumin \ent{Y_i|Y_1^{i-1}} \right)\\
&\stackrel{(b)}{\leq}& \limn \frac{1}{n} \left( \sumin \ent{X_i,Y_i|X_1^{i-1},Y_1^{i-1}} \right.\notag\\
&&-\left.\sumin \ent{Y_i|X_1^{i-1},Y_1^{i-1}} \right)\\
&=& \limn \frac{1}{n} \sumin \ent{X_i|X_1^{i-1},Y_1^{i}}\label{eq:suminproof}
 \end{IEEEeqnarray}
where $(a)$ is due to the chain rule of entropy and $(b)$ is due to the fact that conditioning reduces entropy. The expression under the sum in~\eqref{eq:suminproof} is a non-negative decreasing sequence in $i$ and thus has a limit. We use the Ces\'aro mean~\cite[Thm.~4.2.3]{Cover_Information2} and obtain
\begin{IEEEeqnarray}{RCL}
 \entrate{X|Y} &\leq& \limn \ent{X_n|X_1^{n-1},Y_1^{n}}\\
 &\leq& \ent{X_n|X_{n-N}^{n-1},Y_{n-M}^{n}}\\
 &=& \ent{X_n|Y_n,\Theta_n}.
\end{IEEEeqnarray}
We now replace $Y_n=f(X_n,\Theta_n) = f_{\Theta_n}(X_n)$, where we treat the collection of all previous RVs influencing $Y_n$ as a (random) parameter $\Theta_n$ of the function. This approach lets us interpret the dynamical system as a parameterized static system $f_{\Theta_n}{:}\;\dom{X}\to\dom{Y}$, where we let $\Theta_n$ take values $\theta$ from $\dom{T}=\dom{X}^N\times\dom{Y}^M$. We thus continue
\begin{IEEEeqnarray}{RCL}
 \entrate{X|Y} &\leq& \ent{X_n|f_{\Theta_n}(X_n),\Theta_n}\\
&=&  \sum_{(x,\theta)\in\dom{X}\times\dom{T}} \ent{X_n|f_\theta(x),\theta} \Prob{X_n=x,\Theta_n=\theta}\notag\\
&\stackrel{(c)}{\leq}& \sum_{(x,\theta) \in\dom{X}\times\dom{T}} \log |f^{-1}_\theta[f_\theta(x)]|  \Prob{X_n=x,\Theta_n=\theta}\notag\\
&\leq& \max_{(x,\theta)\in\dom{X}\times\dom{T}} \log |f^{-1}_\theta[f_\theta(x)]|
\end{IEEEeqnarray}
where $(c)$ is due to conditioning and the maximum entropy property of the uniform distribution over an alphabet size equal to the cardinality of the preimage under $f_\theta$. Maximizing over all possible $x$ and parameter values $\theta$ completes the proof.
\end{IEEEproof}
This result can be interpreted as relating the information loss rate of a dynamical system to the information loss rate induced by a static function. In particular, we let the static function be parameterized by previous input and output values taking effect on $Y_n$ and upper bound the information loss rate by the maximum cardinality of the preimage under $f_\theta$. While this upper bound may be rather conservative, it is particularly simple to evaluate if the system function from Definition~\ref{def:function} is available. We will illustrate the use of this result in Section~\ref{ssec:Hammerstein}.

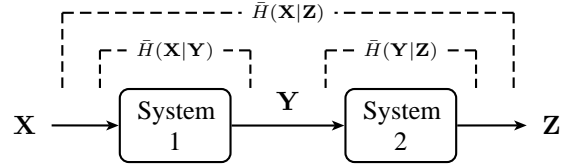
\begin{figure}[t]
 \centering
\begin{pspicture}[showgrid=false](1,1.5)(8,4)
	\psset{style=RoundCorners}
 	\pssignal(1,2){x}{$\Xvec$}
	\psfblock[framesize=1.5 1](3,2){d}{\parbox{\psunit}{\centering System 1}}
	\psfblock[framesize=1.5 1](6,2){c}{\parbox{\psunit}{\centering System 2}}
	\pssignal(8,2){y}{$\Zvec$}
  \nclist[style=Arrow]{ncline}[naput]{x,d,c $\Yvec$,y}
	\psline[style=Dash](2,3)(4,3)
	\psline[style=Dash](2,2.5)(2,3)
	\psline[style=Dash](4,2.5)(4,3)
	\psline[style=Dash](5,3)(7,3)
	\psline[style=Dash](5,2.5)(5,3)
	\psline[style=Dash](7,2.5)(7,3)
	\psline[style=Dash](1.5,3.5)(7.5,3.5)
	\psline[style=Dash](1.5,2.5)(1.5,3.5)
	\psline[style=Dash](7.5,2.5)(7.5,3.5)
 	\rput*(3,3){\scriptsize\textcolor{black}{$\entrate{X|Y}$}}
	\rput*(6,3){\scriptsize\textcolor{black}{$\entrate{Y|Z}$}}
	\rput*(4.5,3.5){\scriptsize\textcolor{black}{$\entrate{X|Z}$}}
\end{pspicture}
 \caption{Cascade of systems}
 \label{fig:cascade}
\end{figure}

Finally, we present a result about the cascade of systems (see Fig.~\ref{fig:cascade}):
\begin{thm}[Cascading Systems]\label{thm:cascade}
 Let $\Xvec$, $\Yvec$, and $\Zvec$ be jointly stationary stochastic processes on countable sets $\dom{X}$, $\dom{Y}$, and $\dom{Z}$, respectively, where $\Yvec$ is generated by passing $\Xvec$ through a system satisfying Definition~\ref{def:function}, and $\Zvec$ is generated by passing $\Yvec$ through another such system. Then, the information loss rate induced by the cascade is
\begin{equation}
 \entrate{X|Z} = \entrate{X|Y} + \entrate{Y|Z}.
\end{equation}
\end{thm}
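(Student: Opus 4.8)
The plan is to split the cascade's loss rate into the entropy-rate differences of the two stages. Applying Theorem~\ref{thm:ILisED} separately to each system gives $\entrate{X|Y}=\entrate{X}-\entrate{Y}$ and $\entrate{Y|Z}=\entrate{Y}-\entrate{Z}$; summing these and cancelling $\entrate{Y}$ shows that the asserted identity is equivalent to $\entrate{X|Z}=\entrate{X}-\entrate{Z}$. Writing $\ent{X_1^n|Z_1^n}=\ent{X_1^n,Z_1^n}-\ent{Z_1^n}$, this in turn reduces to establishing the single limit $\limn\frac{1}{n}\ent{X_1^n,Z_1^n}=\entrate{X}$.

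The main obstacle is that one cannot apply Theorem~\ref{thm:ILisED} directly to the pair $(\Xvec,\Zvec)$: if the second system is recursive, unrolling both stages makes $Z_n$ depend on the entire past of $\Xvec$, so the composite map need not meet the finite-memory requirement of Definition~\ref{def:function}. I would circumvent this by retaining the intermediate process $\Yvec$ and using the sandwich
\[
\ent{X_1^n}\leq\ent{X_1^n,Z_1^n}\leq\ent{X_1^n,Y_1^n,Z_1^n},
\]
where both inequalities hold because adjoining RVs cannot decrease entropy, and the lower bound already grows at rate $\entrate{X}$.

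It then remains to show that the upper bound also grows at rate $\entrate{X}$. Since the second stage has the form $Z_n=g(Y_{n-N_2}^n,Z_{n-M_2}^{n-1})$, for every $n>\max\{M_2,N_2\}$ the sample $Z_n$ is a deterministic function of RVs already contained in the list $X_1^n,Y_1^n,Z_1^{n-1}$; repeatedly invoking Lemma~\ref{lem:addingDepRVs} therefore strips these away, leaving $\ent{X_1^n,Y_1^n,Z_1^n}=\ent{X_1^n,Y_1^n,Z_1^{\max\{M_2,N_2\}}}$. The finitely many surviving boundary terms contribute nothing in the rate, by the same reasoning used after Lemma~\ref{lem:finvsinf}, so the upper bound has the rate of $\ent{X_1^n,Y_1^n}$, which equals $\entrate{X}$ by the key intermediate identity established in the proof of Theorem~\ref{thm:ILisED} for the first stage. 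The sandwich then pins $\limn\frac{1}{n}\ent{X_1^n,Z_1^n}$ to $\entrate{X}$, finishing the argument. Equivalently, one can observe that the joint process $\Wvec=(\Yvec,\Zvec)$---whose state concatenates those of the two stages---does satisfy Definition~\ref{def:function} when driven by $\Xvec$, so that Theorem~\ref{thm:ILisED} applies to $(\Xvec,\Wvec)$ and delivers the upper-bound rate at once.
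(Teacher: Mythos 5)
Your proof is correct, and it is in fact more careful than the paper's own argument. The paper's proof is a three-line telescoping: it invokes Theorem~\ref{thm:ILisED} directly on the pair $(\Xvec,\Zvec)$ to write $\entrate{X|Z}=\entrate{X}-\entrate{Z}$, inserts $-\entrate{Y}+\entrate{Y}$, and applies the same theorem to each stage --- exactly the reduction you perform in your first paragraph. In doing so, however, the paper silently assumes that the cascade itself satisfies Definition~\ref{def:function}, i.e., that $Z_n$ is a function of finitely many $X_{n-N}^n$ and $Z_{n-M}^{n-1}$; as you correctly observe, this is not automatic, since a recursive first stage makes $Z_n$ depend on the entire past of $\Xvec$, and the intermediate $Y$'s need not be recoverable from $\Zvec$. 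Your sandwich $\ent{X_1^n}\leq\ent{X_1^n,Z_1^n}\leq\ent{X_1^n,Y_1^n,Z_1^n}$, combined with stripping the $Z$-samples via Lemma~\ref{lem:addingDepRVs}, discarding the boundary terms as after Lemma~\ref{lem:finvsinf}, and the identity $\limn\frac{1}{n}\ent{X_1^n,Y_1^n}=\entrate{X}$ from the first stage, supplies precisely the missing justification; your closing observation that $\Wvec=(\Yvec,\Zvec)$, with concatenated state, does satisfy Definition~\ref{def:function} when driven by $\Xvec$ is the cleanest way to phrase the repair, since Theorem~\ref{thm:ILisED} applied to $(\Xvec,\Wvec)$ yields $\limn\frac{1}{n}\ent{X_1^n,Y_1^n,Z_1^n}=\entrate{X}$ in one step. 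What the paper's route buys is brevity; what yours buys is that the relation $\entrate{X|Z}=\entrate{X}-\entrate{Z}$ is actually established for the composite system rather than assumed, so the theorem holds exactly as stated.
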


\begin{proof}
By using Theorem~\ref{thm:ILisED} $\entrate{X|Z}$ can be written as
\begin{eqnarray}
 \entrate{X|Z} &=& \entrate{X} - \entrate{Z}\\
&=& \entrate{X} - \entrate{Y} + \entrate{Y} - \entrate{Z}\\
&=& \entrate{X|Y} + \entrate{Y|Z}.
\end{eqnarray}
\end{proof}

\section{Partially Invertible Systems}
\label{sec:partInv}
We now impose an additional restriction on the system function in Definition~\ref{def:function}. This additional restriction defines a family of systems for which the information loss rate can be shown to vanish.

\begin{definition}[Partially invertible system]\label{def:partInv}
A system satisfying Definition~\ref{def:function} is \emph{partially invertible} if there exists a function $\finv{:}\; \dom{X}^N\times\dom{Y}^{M+1}\to\dom{X}$ such that
\begin{equation}
 X_n = \finv(X_{n-N}^{n-1},Y_{n-M}^n)=\finv(Y_n,\Theta_n)=f_{\Theta_n}^{-1}(Y_n).
\end{equation}
In other words, a system is partially invertible if its parameterized static function $f_{\Theta_n}$ is invertible for all possible parameter values $\theta\in\dom{T}$.
\end{definition}

We will now argue that for this class of systems the information loss rate vanishes. We start by showing that the total information loss for a finite-length input sequence $X_1^K$ after observing an output sequence $Y_1^K$ of the same length remains bounded independently of the sequence length:

\begin{thm}\label{thm:InfoLossPI}
 Let $X_1^K$ and $Y_1^K$, $K>\max\{M,N\}$, be two finite-length sequences of jointly stationary processes $\Xvec$ and $\Yvec$ on countable sets $\dom{X}$ and $\dom{Y}$, respectively, where $\Yvec$ is generated by passing $\Xvec$ through a partially invertible system. Then, the information loss becomes
\begin{equation}
 \ent{X_1^K|Y_1^K} = \ent{X_1^{\max\{M,N\}}|Y_1^K}.
\end{equation}
\end{thm}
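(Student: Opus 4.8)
The plan is to show that, once the full output sequence $Y_1^K$ is given, the tail samples $X_{\max\{M,N\}+1}^K$ are a deterministic function of the initial block $X_1^{\max\{M,N\}}$; the identity then drops out of the chain rule. Throughout I write $L=\max\{M,N\}$. The driving relation is the partial-invertibility property $X_n=\finv(X_{n-N}^{n-1},Y_{n-M}^n)$ from Definition~\ref{def:partInv}, which expresses $X_n$ through $N$ earlier inputs together with the current and $M$ earlier outputs.

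First I would establish the base case $n=L+1$, where the relation reads $X_{L+1}=\finv(X_{L+1-N}^{L},Y_{L+1-M}^{L+1})$. Because $L\geq N$ gives $L+1-N\geq 1$, all inputs $X_{L+1-N}^{L}$ lie inside the known block $X_1^{L}$; because $L\geq M$ gives $L+1-M\geq 1$, all outputs $Y_{L+1-M}^{L+1}$ lie inside $Y_1^{L+1}\subseteq Y_1^K$. Hence $X_{L+1}$ is fixed by $(X_1^{L},Y_1^K)$. I would then run the induction: supposing $X_1^{L+j-1}$ is already pinned down by $(X_1^{L},Y_1^K)$, the relation $X_{L+j}=\finv(X_{L+j-N}^{L+j-1},Y_{L+j-M}^{L+j})$ again has all of its arguments available, since $L+j-N\geq 1$, $L+j-M\geq 1$, and the largest output index $L+j$ does not exceed $K$ for $j\leq K-L$. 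Iterating over $j=1,\dots,K-L$ shows that the whole tail $X_{L+1}^K$ is a deterministic function of $(X_1^{L},Y_1^K)$.

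To conclude, I would expand by the chain rule of conditional entropy:
\begin{IEEEeqnarray}{RCL}
 \ent{X_1^K|Y_1^K} &=& \ent{X_1^{L},X_{L+1}^K|Y_1^K}\notag\\
 &=& \ent{X_1^{L}|Y_1^K}+\ent{X_{L+1}^K|X_1^{L},Y_1^K}.\notag
\end{IEEEeqnarray}
The second term is zero, since $X_{L+1}^K$ is a deterministic function of the variables being conditioned on, leaving $\ent{X_1^K|Y_1^K}=\ent{X_1^{\max\{M,N\}}|Y_1^K}$, as claimed. (Equivalently, since $X_{L+1}^K$ is determined by $(X_1^{L},Y_1^K)$, Lemma~\ref{lem:addingDepRVs} gives $\ent{X_1^K,Y_1^K}=\ent{X_1^{L},Y_1^K}$, from which the statement follows in the spirit of the proof of Theorem~\ref{thm:ILisED}; the conditional form above has the advantage of avoiding any $\infty-\infty$ subtraction on countable alphabets.)

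I expect the only real obstacle to be the index bookkeeping in the induction: at every step $n\in\{L+1,\dots,K\}$ one must confirm that both argument blocks $X_{n-N}^{n-1}$ and $Y_{n-M}^n$ remain inside the already-known sets. This is exactly what the threshold $L=\max\{M,N\}$ together with the hypothesis $K>L$ is designed to guarantee; the remaining information-theoretic content is merely the vanishing of the conditional entropy of a deterministic quantity.
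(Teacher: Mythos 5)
Your proof is correct, and it rests on the same key idea as the paper's — iterating the partial inverse $X_n=\finv(X_{n-N}^{n-1},Y_{n-M}^n)$ to recover every input sample past the initial block from that block and the observed outputs — but the entropy bookkeeping is genuinely different. The paper never states your determinism claim explicitly: it peels one sample at a time \emph{inside the joint entropy}, writing $\ent{X_1^K,Y_1^K}=\ent{\finv(X_{K-N}^{K-1},Y_{K-M}^{K}),X_1^{K-1},Y_1^K}=\ent{X_1^{K-1},Y_1^K}$ via Lemma~\ref{lem:addingDepRVs}, repeating this down to $\ent{X_1^{\max\{M,N\}},Y_1^K}$, and then subtracting $\ent{Y_1^K}$ from the identity $\ent{X_1^K|Y_1^K}=\ent{X_1^K,Y_1^K}-\ent{Y_1^K}$. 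You instead front-load a forward induction (showing $X_{L+1}^K$, with $L=\max\{M,N\}$, is a deterministic function of $(X_1^{L},Y_1^K)$) and then apply the conditional chain rule once, $\ent{X_1^K|Y_1^K}=\ent{X_1^{L}|Y_1^K}+\ent{X_{L+1}^K|X_1^{L},Y_1^K}$, with the second term vanishing. The two routes cost essentially the same labour — your induction step is exactly the paper's repeated application of Lemma~\ref{lem:addingDepRVs}, run in the opposite direction — but yours has a real technical advantage that you correctly flag: on countable alphabets $\ent{Y_1^K}$ may be infinite, in which case the paper's subtraction is formally an $\infty-\infty$ expression requiring an additional argument, whereas your decomposition involves only non-negative terms and the vanishing conditional entropy of a deterministic function, which holds unconditionally. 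Your index bookkeeping ($L+j-N\geq 1$, $L+j-M\geq 1$, $L+j\leq K$) also makes explicit the role that the hypothesis $K>\max\{M,N\}$ and the threshold $\max\{M,N\}$ play, which the paper leaves implicit in the phrase ``repeating this step a number of times.''
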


\begin{proof}
 We start by noticing that $\ent{X_1^K|Y_1^K}=\ent{X_1^K,Y_1^K}-\ent{Y_1^K}$ and
\begin{IEEEeqnarray}{RCL}
  \ent{X_1^K,Y_1^K}&=&\ent{X_K,X_1^{K-1},Y_1^K}\\
&=&\ent{\finv(X_{K-N}^{K-1},Y_{K-M}^{K}),X_1^{K-1},Y_1^K}\notag\\
&\stackrel{(a)}{=}& \ent{X_1^{K-1},Y_1^K}
\end{IEEEeqnarray}
where $(a)$ is due to Lemma~\ref{lem:addingDepRVs}. Repeating this step a number of times yields
\begin{equation}
  \ent{X_1^K,Y_1^K} = \ent{X_1^{\max\{M,N\}},Y_1^K}.
\end{equation}
Subtracting $\ent{Y_1^K}$ completes the proof.
\end{proof}
Note that even though $f_{\Theta_n}$ is invertible for all parameter values $\theta$, this does only mean that $\ent{X_n|f_{\Theta_n}(X_n),{\Theta_n}}=0$, while $\ent{X_n|f_{\Theta_n}(X_n)}\geq 0$. This corresponds to the statement of Theorem~\ref{thm:InfoLossPI}, where for $n<\max\{M,N\}$ ${\Theta_n}$ has to be considered unknown. It is also important to note that $\ent{X_1^K|Y_1^K}\neq\ent{X_1^K}-\ent{Y_1^K}$. While the information loss rate is equal to the difference of entropy rates (cf. Theorem~\ref{thm:ILisED}), it does not hold generally that the difference of joint entropies is equal to the joint conditional entropy. 

We will now make use of this result in proving that partially invertible systems have a vanishing information loss rate:

\begin{cor}\label{cor:parInfNoLoss}
 Let $\Xvec$ and $\Yvec$ be jointly stationary processes on countable sets related as in Theorem~\ref{thm:InfoLossPI}. Then, the information loss rate induced by passing the process $\Xvec$ through the system vanishes, i.e.,
\begin{equation}
 \entrate{X|Y}=0.
\end{equation}
\end{cor}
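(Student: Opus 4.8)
The plan is to feed the finite-length identity of Theorem~\ref{thm:InfoLossPI} directly into the definition of the information loss rate and then show that the one surviving term vanishes after normalization by the block length. Starting from Definition~\ref{def:infoloss}, I would write $\entrate{X|Y} = \limn \frac{1}{n}\ent{X_1^n|Y_1^n}$. Since the condition $n>\max\{M,N\}$ is eventually satisfied as $n\to\infty$, I apply Theorem~\ref{thm:InfoLossPI} with $K=n$ to replace $\ent{X_1^n|Y_1^n}$ by $\ent{X_1^{\max\{M,N\}}|Y_1^n}$ inside the limit.

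The decisive observation is that the block $X_1^{\max\{M,N\}}$ has a \emph{fixed, finite} length $\max\{M,N\}$ that does not grow with $n$. Hence, because conditioning reduces entropy, $\ent{X_1^{\max\{M,N\}}|Y_1^n} \leq \ent{X_1^{\max\{M,N\}}}$, where the right-hand side is a constant independent of $n$. Dividing by $n$ and letting $n\to\infty$ drives this contribution to zero, in exactly the same way that the term $\frac{1}{n}\ent{Y_1^M}$ is shown to vanish in Lemma~\ref{lem:finvsinf}. Combining this with the non-negativity of $\ent{X_1^n|Y_1^n}$ for every $n$, a squeeze argument of the form $0 \leq \frac{1}{n}\ent{X_1^n|Y_1^n} \leq \frac{1}{n}\ent{X_1^{\max\{M,N\}}} \to 0$ establishes both the existence of the limit and its value $\entrate{X|Y}=0$.

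The only obstacle I anticipate is the countable-alphabet subtlety: $\ent{X_1^{\max\{M,N\}}}$ may itself be infinite when the per-sample entropy is infinite. I would dispose of this exactly as in the discussion following Lemma~\ref{lem:finvsinf}, bounding the block entropy by $\max\{M,N\}\log|\dom{X}|$ and noting that this quantity grows at a slower rate than $n$, so that $\frac{1}{n}\ent{X_1^{\max\{M,N\}}}\to 0$ regardless. With that caveat handled, the corollary follows essentially immediately from Theorem~\ref{thm:InfoLossPI}, requiring no further machinery.
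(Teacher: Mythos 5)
Your proposal is correct and is essentially identical to the paper's own (second) proof: both apply Theorem~\ref{thm:InfoLossPI} with $K=n$, bound $\ent{X_1^{\max\{M,N\}}|Y_1^n}$ by $\ent{X_1^{\max\{M,N\}}}$ via conditioning, and kill the normalized term by the same growth-rate argument used after Lemma~\ref{lem:finvsinf}. (The paper additionally records a one-line alternative: partial invertibility makes every preimage $f_\theta^{-1}[f_\theta(x)]$ a singleton, so Theorem~\ref{thm:UBInfoLossRate} gives $\entrate{X|Y}\leq\log 1=0$ directly.)
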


\begin{proof}
We provide two proofs for this Corollary. For the first, note that irrespective of $\theta$ the inverse function $f_\theta^{-1}$ always exists by Definition~\ref{def:partInv}. With Theorem~\ref{thm:UBInfoLossRate} this immediately leads to $\entrate{X|Y}=0$.

For the second proof we note that Theorem~\ref{thm:InfoLossPI} holds for all $K$, thus also in the limit. With Definition~\ref{def:infoloss} we can therefore write the information loss rate as:
\begin{IEEEeqnarray}{RCL}
 \entrate{X|Y} &=& \limn \frac{1}{n}\ent{X_1^n|Y_1^n}\\
&=& \limn \frac{1}{n}\ent{X_1^{\max\{M,N\}}|Y_1^n}\\
&\leq& \limn \frac{1}{n}\ent{X_1^{\max\{M,N\}}}=0
\end{IEEEeqnarray}
by similar arguments as in the proof of Lemma~\ref{lem:finvsinf}.
\end{proof}

An immediate consequence of this important Corollary is that, except for the initial samples $X_1^{\max\{M,N\}}$ after starting the observation of $\Yvec$ (cf. Theorem~\ref{thm:InfoLossPI}), the remaining information of the input process can be recovered by observing the output process. Note that this not necessarily means that the input process can be reconstructed perfectly, even if reconstruction errors are allowed in the first $\max\{M,N\}$ samples. An illustrative example for this fact will be given in Section~\ref{ssec:multipliers}.

\section{The Case of Linear Filters}
\label{sec:linfilt}
It is interesting to note that an important subclass of discrete-time stable causal linear filters falls in the category of partially invertible systems, as long as the input and output alphabets are countable. An example where the latter condition is satisfied is given if the input process and the coefficients take values from the field of rational numbers. This subclass, powerful enough to cover most applications~\cite{Diniz_DSP}, comprises filters with a finite-dimensional state vector described by constant-coefficient difference equations:
\begin{equation}
 Y_n=\sum_{k=0}^N b_kX_{n-k} + \sum_{l=1}^M a_lY_{n-l} \label{eq:linfilt}
\end{equation}
As noted in~\cite{Papoulis_SpectralEstimation}, stability of the filter guarantees that for a stationary input process the output process is stationary and that Definition~\ref{def:function} applies. By rearranging the terms in~\eqref{eq:linfilt} it can be verified that this subclass of linear systems satisfies the definition of partially invertible systems and, thus, has a vanishing information loss rate.

It is noteworthy that this property is independent of the minimum-phase property (cf.~\cite[pp.~280]{Oppenheim_DiscreteSigProc}) of linear filters, which ensures that the filter has a stable and causal inverse. Indeed, for filters which are not minimum-phase, the partial inverse function $\finv$ used in Definition~\ref{def:partInv} describes a causal, but unstable linear filter. As a consequence, to an arbitrary stationary stochastic input process, the inverse filter described by $\finv$ may respond with a non-stationary output process; however, the response to $\Yvec$ will be $\Xvec$.

A signal space model may effectively illustrate these considerations: Let $\dom{X}^\infty$ and $\dom{Y}^\infty$ be the spaces of stationary input and output processes $\Xvec$ and $\Yvec$, respectively, and let $F\{\cdot\}$ be the (linear) operator mapping each element of $\dom{X}^\infty$ to $\dom{Y}^\infty$. By restricting our attention to regular stochastic processes, i.e., processes which cannot have periodic components, the operator $F\{\cdot\}$ is injective. As a consequence, for each element of $\dom{Y}^\infty$ there exists at most one element in $\dom{X}^\infty$ such that $\Yvec=F\{\Xvec\}$. Note, however, that there are stationary stochastic processes in $\dom{Y}^\infty$ which are not images of elements in $\dom{X}^\infty$. Only if $F\{\cdot\}$ is such that it describes a stable, causal minimum-phase system, i.e., has a \emph{stable} and causal inverse, $\dom{Y}^\infty$ contains only images of elements from $\dom{X}^\infty$.

This complements a result already introduced by Shannon~\cite{Shannon_TheoryOfComm}, which states that the change in \emph{differential} entropy rate caused by stable, causal linear filtering of continuous-valued stationary processes is independent of the process statistics. In particular, for a linear filter with frequency response $G(\e{\jmath\theta})$ the differential entropy rate of the output is given by~\cite[pp.~663]{Papoulis_Probability}
\begin{equation}
 \derate{Y} = \derate{X} + \frac{1}{2\pi}\int_{-\pi}^\pi \ln |G({\e{\jmath\theta}})| d \theta. \label{eq:dratediff}
\end{equation}

It can be shown (see, e.g.,~\cite{Yu_KLDDynSys}) that the integral above evaluates to $\ln |b_0|+\sum_{i:|z_i|>1}\ln|z_i|$, where $z_i$ are the zeros of the transfer function $G(z)$. For causal minimum-phase systems ($|z_i|<1\ \forall i$) with $b_0=1$, the differential entropy rates for the input and output process are equal. This result was recently verified by~\cite{Dumitrescu_EntropyInvariance}, which analyzed the invariance of entropy rates for all-pole filters. Scaling the transfer function of such a filter such that $b_0\neq 1$ leads to $\derate{X} \neq \derate{Y}$, despite the fact that by scaling no information is lost. Conversely, it is easily possible that $\derate{X}=\derate{Y}$ for systems which destroy information. Therefore, we believe that differential entropies and differential entropy rates are not adequate measures for information loss. Future investigations will show if alternative descriptions for continuous-valued processes will yield more appropriate characterizations.

\section{Other Examples}
\label{sec:examples}
While the case of linear filters is a particularly interesting one, the restriction to countable input and output alphabets suggests further examples illustrating the application of our theoretical results.

\subsection{Example 1: Finite-Precision Linear Filters}
The first example considers an extension to the subclass of discrete-time linear filters discussed in Section~\ref{sec:linfilt}. In many practical applications in digital signal processing linear filters are implemented with finite-precision number representations only.
We thus assume that both input process and filter coefficients take values from a finite set. For example, $\dom{X}$ may be a finite subset of the rational numbers $\mathbb{Q}$, closed under modulo-addition. Multiplying two values from that set, e.g., by multiplying an input sample with a filter coefficient, typically yields a result not representable in $\dom{X}$. As a consequence, after every multiplication a quantizer is necessary, essentially truncating the additional bits resulting from multiplication. Let the quantizer be described by a function $Q{:}\; \mathbb{R}\to\dom{X}$ with $\quant{a+X_n}=\quant{a}\oplus X_n$ if $X_n\in\dom{X}$, where $\oplus$ denotes modulo-addition  (e.g.,~\cite[pp.~373]{Oppenheim_DiscreteSigProc}). With this~\eqref{eq:linfilt} changes to
\begin{equation}
 Y_n=\bigoplus_{k=0}^N \quant{b_kX_{n-k}} \oplus \bigoplus_{l=1}^M \quant{a_lY_{n-l}} \label{eq:nlinfilt}
\end{equation}
or
\begin{equation}
 Y_n=\quant{\bigoplus_{k=0}^N b_kX_{n-k} \oplus \bigoplus_{l=1}^M a_lY_{n-l}} \label{eq:nnlinfilt}
\end{equation}
depending whether quantization is performed after multiplication or after accumulation (in the latter case, the intermediate results are represented in a larger set $\dom{X}'$). Note that due to modulo-addition the result $Y_n$ remains in $\dom{X}$.

We will now focus on filters with $b_0=1$. For filters with infinite precision this can be done without loss of generality by considering a constant gain factor $b_0$ and by normalizing all $b_k$ coefficients. However, this gain normalization poses a restriction in the finite-precision case since $b_k/b_0$ is not necessarily an element of $\dom{X}$. With $b_0=1$~\eqref{eq:nlinfilt} and~\eqref{eq:nnlinfilt} change to
\begin{equation}
 Y_n= X_n \oplus \left(\bigoplus_{k=1}^N \quant{b_kX_{n-k}} \oplus \bigoplus_{l=1}^M \quant{a_lY_{n-l}}\right) \label{eq:nlinfilt2}
\end{equation}
and
\begin{equation}
 Y_n=X_n\oplus\quant{\bigoplus_{k=1}^N b_kX_{n-k} - \bigoplus_{l=1}^M a_lY_{n-l}} \label{eq:nnlinfilt2}
\end{equation}
by the property of the quantizer. From this it can be seen that either implementation is partially invertible (the terms in parentheses in~\eqref{eq:nlinfilt2} and~\eqref{eq:nnlinfilt2} are both in $\dom{X}$, and modulo-addition has an inverse element). Consequently, even filters with nonlinear elements can be shown to preserve information under certain circumstances despite the fact that the quantizer function is non-injective. 

\subsection{Example 2: Multiplying Consecutive Inputs}\label{ssec:multipliers}
Another nonlinear system satisfying Definition~\ref{def:partInv} is given by the following input-output relationship:
\begin{equation}
 Y_n=X_nX_{n-1}
\end{equation}
The partial inverse in this case would be $X_n=\frac{Y_n}{X_{n-1}}$ if $X_{n-1}\neq0$, while for $X_{n-1}=0$ no such inverse exists. Therefore, this example represents a class of systems whose partial invertibility depends on the alphabet $\dom{X}$ of the stochastic process. If the process $\Xvec$ is such that $\dom{X}$ does not contain the element $0$, the partial inverse exists and we obtain for $X_n$, $n>1$:
\begin{equation}
 X_n=\begin{cases}
      X_1 \prod_{k=1}^{\frac{n-1}{2}} \frac{Y_{2k+1}}{Y_{2k}},& \text{for odd $n$}\\
\frac{Y_n}{X_1} \prod_{k=1}^{\frac{n}{2}-1} \frac{Y_{2k}}{Y_{2k+1}},& \text{for even $n$}
     \end{cases}\label{eq:reconstructMult}
\end{equation}
Indeed, since all $X_n$, $n>1$, can be computed from $X_1$ and $Y_1^n$, we obtain $\ent{X_1^n|Y_1^n}= \ent{X_1|Y_1^n}$ which is in perfect accordance with Theorem~\ref{thm:InfoLossPI}. Reconstruction of $\Xvec$ is thus possible up to an unknown $X_1$. Note, however, that this unknown sample influences the whole reconstructed sequence as shown in~\eqref{eq:reconstructMult}. Thus, even though the information loss rate vanishes, perfect reconstruction of any subsequence of $\Xvec$ is impossible by observing the output process $\Yvec$ only.

\subsection{Example 3: Hammerstein Systems}\label{ssec:Hammerstein}
\begin{figure}[t!]
 \centering
\begin{pspicture}[showgrid=false](1,1.5)(7,2.5)
	\psset{style=Arrow, style=RoundCorners}
 	\pssignal(1,2){z}{$\Xvec$}
	\psblock(3,2){ap1}{$g(\cdot)$}
	\psblock(5,2){ap2}{\parbox{\psunit}{\centering Linear Filter}}
	\pssignal(7,2){y}{$\Yvec$}
  \nclist[style=Arrow]{ncline}[naput]{z,ap1,ap2 $\mathbf{V}$,y}
\end{pspicture}
 \caption{Discrete-Time Hammerstein System. For $g(\cdot)=(\cdot)^2$ and if the linear filter is a moving-average filter, this corresponds to a discretized model of the energy detector.}
 \label{fig:ed}
\end{figure}
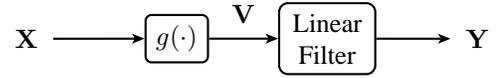

A final example considers a simple special case of a nonlinear dynamical system, namely, a cascade of a static nonlinearity and a linear filter~\cite{Shmaliy_CTSystems}. Such a cascade, usually referred to as Hammerstein system, is depicted in Fig.~\ref{fig:ed}. A practical example of such a Hammerstein system is the energy detector, a popular low-complexity receiver architecture in wireless communications. In the discrete-time case the input-output relationship is given by
\begin{equation}
 Y_n = \sum_{k=0}^N b_kg(X_{n-k}) + \sum_{l=1}^M a_lY_{n-l}.
\end{equation}
As it is easily seen, this system is partially invertible if and only if the function $g$ has an inverse. If $g$ is not invertible, we obtain in the light of Theorem~\ref{thm:UBInfoLossRate}:
\begin{equation}
 Y_n =f_{\Theta_n}(X_n)= b_0g(X_n) + C_{\Theta_n}
\end{equation}
where $C_{\Theta_n}$ is a constant depending on the random parameter $\Theta_n$. With this and $f^{-1}_\theta[f_\theta(x)]=g^{-1}[g(x)]$ for all $x\in\dom{X}$, $\theta\in\dom{T}$ we obtain an upper bound on the information loss rate:
\begin{IEEEeqnarray}{RCL}
 \entrate{X|Y}\leq\max_{x\in\dom{X}} \log|g^{-1}[g(x)]|
\end{IEEEeqnarray}

Interestingly, the structure of this system allows a simplified analysis: Since the information loss rate of a cascade of systems is equal to the sum of individual information loss rates (cf.~Theorem~\ref{thm:cascade}) we can analyze both constituent systems separately. The linear filter was already shown to preserve full information, so any information loss will be caused by the static nonlinearity, i.e., $\entrate{X|Y}=\entrate{X|V}$. This is in accordance with the observation that the Hammerstein system is partially invertible if the static nonlinearity is invertible.

For static nonlinearities the analytic treatment of information loss is simple compared to dynamical systems. In particular, for an independent, identically distributed (iid) input process $\Xvec$ the information loss rate can be shown to be equal to the zeroth-order conditional entropy, $\ent{X|V}$, while for a general stationary process this quantity acts as an upper bound~\cite{Watanabe_InfoLoss}. The upper bound from Theorem~\ref{thm:UBInfoLossRate} turns out to be even more general, since it also provides an upper bound on $\ent{X|V}$ in the case of an iid input process (cf.~Theorem~4 in~\cite{Geiger_ISIT2011arXiv}). An in-depth analysis of the interplay between these bounds is the object of future work.

\section{Conclusion}
In this work we have presented general results on the information loss of dynamical systems for stationary stochastic input and output processes on countable alphabets. Furthermore, we have extended the proof of the data processing inequality stating that the entropy rate at the output of the system cannot be larger than the entropy rate at the input and have derived an upper bound on the information loss rate. The additivity of information loss rates for cascaded systems could be shown, too.

We have further identified a family of systems for which this upper bound is zero, i.e., for which the information loss rate vanishes. Not only linear filters belong to that family, but also their nonlinear counterparts common in finite-precision signal processing.

Future research will extend these results to the case of continuous-valued stochastic processes and the application to common nonlinear systems, e.g., Volterra models.

\section*{Acknowledgments}
The authors gratefully acknowledge discussions with Sebastian Tschiatschek concerning mathematical notation, and his comments improving the quality of this manuscript.

\bibliographystyle{IEEEtran}
\bibliography{IEEEabrv,InformationProcessing,%
ProbabilityPapers,%
textbooks,%
myOwn,%
InformationWaves,%
ITBasics,%
HMMRate,%
ITAlgos}

\end{document}